\newtheorem{theorem}{Theorem}[section]
\newtheorem{definition}{Definition}[section]
\newtheorem{example}{Example}[section]
\title{Fuzzy Maximum Satisfiability}
\author{Mohamed El Halaby \\ \small{Department of Mathematics}\\
\small{Faculty of Science}\\
\small{Cairo University}\\
\small{Giza, 12613, Egypt}\\
\href{mailto:halaby@sci.cu.edu.eg}{halaby@sci.cu.edu.eg}
\and
Areeg Abdalla \\ \small{Department of Mathematics}\\
\small{Faculty of Science}\\
\small{Cairo University}\\
\small{Giza, 12613, Egypt}\\
\href{mailto:areeg@sci.cu.edu.eg}{areeg@sci.cu.edu.eg}}
\date{}
\begin{document}
\maketitle
\begin{abstract}
In this paper, we extend the Maximum Satisfiability (MaxSAT) problem to \L{}ukasiewicz logic. The MaxSAT problem for a set of formulae $\Phi$ is the problem of finding an assignment to the variables in $\Phi$ that satisfies the maximum number of formulae. Three possible solutions (encodings) are proposed to the new problem: (1) Disjunctive Linear Relations (DLRs), (2) Mixed Integer Linear Programming (MILP) and (3) Weighted Constraint Satisfaction Problem (WCSP). Like its Boolean counterpart, the extended fuzzy MaxSAT will have numerous applications in optimization problems that involve vagueness.
\end{abstract}

\section{Introduction}
\textit{Boolean Satisfiability} (SAT) stands at the crossroads of logic, graph theory and computer science in general. For this reason, nowadays more problems are being solved faster by SAT solvers than other means. A lot of real-life problems are difficult to solve because they pose computational challenges. In many of these problems, it is not sufficient to find a solution but rather one that is optimal. These are called optimization problems and they arise frequently in the real world. One of the most effective ways to solve optimization problems is to first model them mathematically or logically, then solve them using a suitable algorithm.

Maximum Satisfiability (MaxSAT) is the optimization version of SAT. In recent years, there has been a growing interest in developing efficient  algorithms\cite{morgado2013iterative,narodytska2014maximum} and implementing them into competent solvers that could solve instances from real-life applications\cite{asin2012curriculum,jose2011cause,safarpour2007improved,argelich2010solving,argelich2008cnf}. In fact, an annual competition called The MaxSAT Evaluations is held for the purpose of running recent solvers on categories of benchmarks (random, crafted and industrial instances) then declaring a winner for each categories. Indeed, the performance of MaxSAT solvers is getting better with time and hence it is becoming more feasible to solve practical problems using MaxSAT.

Fuzzy logic is an extension of Boolean logic by Lotfi Zadeh in 1965 based on the theory of fuzzy sets, which is a generalization of the classical set theory. Introducing the notion of degree in the verification of a condition enables a condition to be in a state other than true or false (thus, infinite truth degrees). Fuzzy logic provides a very valuable flexibility for reasoning, which makes it possible to take into account inaccuracies and vagueness.

\subsection{Boolean Logic and SAT}

A \textit{Boolean variable} $x$ can take one of two possible values: $1$ or $0$. A \textit{literal} $l$ is a variable $x$ or its negation $\neg x$. A \textit{disjunction} $C$ is a group of $r$ literals joined by $\lor$. This is expressed as $C = \bigvee_{i=1}^{r} l_i$ A Boolean formula $\phi$ in Conjunctive Normal Form (CNF) is a group of $m$ disjunctions joined by $\wedge$ (i.e., a conjunction of disjunctions). From now on, we will refer to a disjunction in a CNF formula as a \textit{clause}. If $\phi$ consists of $m$ clauses where each clause $C_i$ is composed of $r_i$ literals, then $\phi$ can be written as
$$\phi = \bigwedge_{i=1}^m C_i$$ where
$$C_i = \bigvee_{j=1}^{r_i} l_{i,j}$$

A formula is said to be $k$-CNF if each clause has exactly $k$ literals. Sometimes we consider a CNF formula as a set of clauses $\phi=\{C_1,\dots,C_m\}$. A Boolean CNF formula will be referred to as just a formula for short. If $\phi$ is a formula over the $n$ variables $x_1,\dots,x_n$, then a \textit{complete assignment} of $\phi$ is a set $A=\{x_1=b_1,\dots,x_n=b_n\}$, where each $b_i,(1 \leq i \leq n)$ is either $1$ or $0$. A \textit{partial assignment} is an assignment that leaves out some variables unassigned. An assignment $A$ (complete or partial) satisfies a literal $x$ if $x$ is assigned $1$ in $A$ and satisfies a literal $\neg x$ if $x$ is assigned $0$ in $A$. A clause $C$ is satisfied by $A$ if at least one literal of $C$ is satisfied by $A$. A formula $\phi$ is satisfied by $A$ if $A$ satisfies all the clauses of $\phi$.

The decision version of the \textit{SAT problem}, given a formula $\phi$, is deciding whether there exists an assignment that satisfies $\phi$. The search version is concerned with finding (searching) for a satisfying assignment for $\phi$. For example, $\phi_1=\{(x_1 \vee \neg x_2),(\neg x_1 \vee x_3),(\neg x_1 \vee x_2 \vee \neg x_3)\}$ has the satisfying assignment $A_1=\{x_1=1,x_2=0,x_3=0\}$. The formula $\phi_2=\{(x_1 \lor x_2 \lor x_3 \lor x_4 \lor \neg x_5),(x_1 \lor x_2 \lor x_3 \lor \neg x_4 \lor x_5),(x_1 \lor x_2 \lor \neg x_3 \lor x_4 \lor x_5),(x_1 \lor \neg x_2 \lor x_3 \lor x_4 \lor x_5),(\neg x_1 \lor x_2 \lor x_3 \lor x_4 \lor x_5),(x_1 \lor x_2 \lor x_3 \lor x_4 \lor x_5)\}$ has the solution $A_2 = \{x_1 = 0, x_2 = 1, x_3 = 1, x_4 = 0, x_5 = 0\}$, which indeed satisfies $\phi_2$.

% % % % % % % % % % % % % % % % % % %\subsection{Maximum Satisfiability ()}
\subsection{MaxSAT}
\textit{Maximum Satisfiability} is a generalization of SAT. The idea behind it is that sometimes not all restrictions of a problem can be satisfied, and we try to satisfy as much of them as possible.

Given a CNF formula $\phi$, MaxSAT asks for an assignment that maximizes the number of satisfied clauses{pipatsrisawat2007clone}.  For example, $\phi=\{(y\vee z),(\neg z),(x\vee\neg y),(\neg x\vee z)\}$ has $A=\{x=0,y=0,z=0\}$ as a solution. The maximum number of satisfied clauses in $\phi$ is three. Table 1.1 shows all the possible assignments for $\phi$ and the number of clauses that each one satisfies. 

\begin{table}[h]
\centering
\begin{tabular}{|c|c|c|c|}

\hline
{$\textbf{x}$} & {$\textbf{y}$} & {$\textbf{z}$} & {\textbf{Number of satisfied clauses}} \\ \hline
{ $0$} & {$1$} & {{$1$}} & {{2}}       \\ \hline
{  $0$} & {  $1$} & {  $0$} & {3}       \\ \hline
{  $0$} & {  $0$} & {  $1$} & {3}       \\ \hline
{  $0$} & {  $0$} & {  $0$} & {3}       \\ \hline
{  $1$} & {  $1$} & {  $1$} & {3}       \\ \hline
{  $1$} & {  $1$} & {  $0$} & {3}       \\ \hline
{  $1$} & {  $0$} & {  $1$} & {3}       \\ \hline
{$1$} & {$0$} & {$0$} & {2}       \\ \hline
\end{tabular}
\caption{All possible assignments for $\phi$}
\label{table:table1}
\end{table}

There are two general techniques to solve MaxSAT: (1) branch and bound algorithms, and (2) SAT-based algorithms. Branch and bound algorithms\cite{borchers1998two} work by searching the binary tree of all partial assignments to the variables of the input formula. The procedure starts with the empty assignment at the root of the tree and traverses it in a depth-first manner to find an optimal complete solution (represented by leaf nodes). Branching occurs on an unassigned variable at some node and the children of this node correspond to assigning the variable 1 or 0. Later works added more effective techniques in order to boost the search. Namely, more efficient data-structures, new branching heuristics, new simplification rules and more accurate lower bounds\cite{niedermeier1999new,alsinet2004max,shen2005improving,xing2005maxsolver,li2007new,larrosa2008logical}. In practice, branch and bound Max-SAT solvers are suitable for instances generated at random and some crafted ones.

SAT-based MaxSAT algorithms\cite{fu2006solving,marques2007using,marques2008towards,le2010sat4j,an2011qmaxsat} are based on iteratively calling a SAT solver. These techniques work by maintaining and refining a lower bound and/or an upper bound to the optimal solution with the help of a SAT-solver. It has been found that these techniques are particularly suitable for benchmarks coming from industrial applications and some crafted ones.  One way to do this, given a Max-SAT instance, is to check if there is an assignment that falsifies no clauses. If such an assignment can not be found, the algorithm checks if there is an assignment that falsifies only one clause. This is repeated and each time the algorithm increments the number of clauses that are allowed to be falsified until the SAT solver returns 1 (or true), meaning that the minimum number of falsified clauses has been determined. Comprehensive surveys on SAT-based MaxSAT solving can be found in\cite{morgado2013iterative,ansotegui2013sat}.
% \subsection{Fuzzy logic}

\subsection{Fuzzy Logic}
Let $X$ be a nonempty set, a fuzzy set $A$ in $X$ is characterized by its membership function $$\mu_A: X \rightarrow [0,1]$$ and $\mu_A(x)$ is interpreted as the degree of membership of element $x$ in fuzzy set $A$ for each $x \in X$. So, $A$ is determined by $$A=\{(x,\mu_A(x))\mid x\in X \}$$

A fromula is built from a set of variables $\mathcal{V}$, constants from $[0,1]$ and an $n$-ary connective $F$ for $n \in \mathbb{N}$. An assignment (also called an \textit{interpretation})is a mapping $I: \mathcal{V} \rightarrow [0,1]$, where:
\begin{itemize}
	\item For each constant $c \in [0,1]$, $[c]_I=c$.
	\item $[\neg \phi]_I=F_\neg([\phi]_I)$.
	\item $[\phi \circ \psi]_I = F_\circ([\phi]_I \circ [\psi]_I)$, where $\circ \in \{\neg,\oplus,\odot,\wedge,\lor\}$ is a binary connective.
\end{itemize}

The following table defines basic operations of \L{}ukasiewicz logic. We will be dealing with five operations, namely negation ($\neg$), the strong and weak disjunction ($\oplus$ and $\lor$ respectively) and the strong and weak conjunction ($\odot$ and $\wedge$ respectively).

\begin{table}[h]
\centering
\small
\begin{tabular}{|l|l|} \hline
\textbf{Name} & \textbf{Definition} \\ \hline
Negation $\neg$ & $F_\neg(x)=1-\mu(x)$ \\ \hline
Strong disjunction $\oplus$ & $F_\oplus(x,y) = min\{1,x + y\}$ \\\hline 
Strong conjunction $\odot$ & $F_\odot(x,y) = max\{x + y-1,0\}$ \\\hline
Weak disjunction $\lor$ & $F_\lor(x,y) = max\{x,y\}$ \\\hline
Weak conjunction $\wedge$ & $F_\wedge(x,y)=min\{x,y\}$\\\hline
Implication $\rightarrow$ & $F_\rightarrow(x,y)=min\{1,1-x+y\}$\\
\hline
\end{tabular}
\caption{Logical operations in \L{}ukasiewicz logic}
\end{table}

Given a formula $\phi$ in \L{}ukasiewicz logic and an assignment $I$, we say that $I$ satisfies $\phi$ iff $[\phi]_I=1$.

\begin{example}
Let $\phi=\neg(x_1 \odot x_2 \odot \neg x_3)$. Consider the following two assignments:
\begin{enumerate}
\item $I_1$ with $I_1(x_1)=0, I_1(x_2)=0, I_1(x_3)=1$. $$[\phi]_{I_1}=\neg(max\{0+0-1,0\} \odot \neg 1) = \neg(0 \odot 0) = \neg 0 = 1$$
\item $I_2$ with $I_2(x_1)=0.6, I_2(x_2)=0.7, I_2(x_3)=0.2$. $$[\phi]_{I_2}=\neg(max\{.6+0.7 - 1,0\} \odot (\neg 0.2)) $$$$= \neg(max\{.3+0.8 - 1,0\}) = \neg 0.1 = 0.9$$
\end{enumerate}
So, $I_1$ satisfies $\phi$, but $I_2$ does not.
\end{example}

The same principle of satisfiability exists in fuzzy logics (and many-valued logics), denoted SAT$_\infty$. Like its classical counterpart, it is useful for solving a variety of problems. We say that a formula $\phi$ in \L{}ukasiewicz logic is \textit{satisfiable} iff there exists an assignment $I$ such that $[\phi]_I=1$.

\begin{example}
In the previous example, $\phi$ is satisfiable since there exists an assignment ($I_1$) that satisfies it.
\end{example}
An assignment $I$ is said to be a \textit{model} of a set of formulas $\Phi$ iff $l \leq [\alpha]_I \leq u$ for every formula $\alpha \in \Phi$, given a lower bound $l$ and upper bound $u$ for that formula (usually $u = 1$, and in classical logic even both $l= u=1$).

Solving satisfiability in fuzzy logics is still growing in theory as well as in application. In addition, to the best of our knowledge, MaxSAT has not been defined over fuzzy logic.

\subsubsection{Discretization}
In practice, it is common to assume a finite number of truth degrees, taken from a set $$T_k=\{0,\frac{1}{k},\frac{2}{k},\dots,1\}$$ with $k \in \mathbb{N}-\{0\}$.
\begin{itemize}
\item Let $L_\infty$ denote infinite-valued \L{}ukasiewicz logic and $L_k$ denote the $(k+1)$-valued version in which only interpretations are considered that take truth degrees from $T_k$.

\item For every set of formulas $\Phi$ in $L_\infty$, there exists a finite number of truth degrees $d$, such that $\Phi$ is satisfiable in $L_\infty$ iff it is satisfiable in $L_d$.
\end{itemize}

\section{Fuzzy MaxSAT}

\begin{definition}
Given a set of formulae $\Phi$ in \L{}ukasiewicz logic, the MaxSAT problem asks for an assignment $I$ that maximizes the number of satisfied formulae in $\Phi$.
\end{definition}

From now on we will call the MaxSAT problem defined over propositional logic ``Boolean MaxSAT'' and the one defined over \L{}ukasiewicz logic ``fuzzy MaxSAT''.

\begin{definition}
The fuzzy Partial MaxSAT problem (fuzzy PMaxSAT) for the \L{}ukasiewicz set of formulae $\phi=S \cup H$ is the problem of finding an assignment that satisfies all the formulae in $H$ and maximizes the number of satisfied formulae in $S$.
\end{definition}

\begin{example}
Consider solving fuzzy MaxSAT on $\phi$ in example 1. Assignment $I_1$ is a solution since $[\phi]_{I_1}=1$, which is the largest truth degree possible.
\end{example}

\begin{theorem}
Boolean MaxSAT is reducible to fuzzy PMaxSAT in polynomial time.
\end{theorem}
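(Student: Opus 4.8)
The plan is to exhibit a polynomial-time map that sends a Boolean CNF formula $\phi=\{C_1,\dots,C_m\}$ over variables $x_1,\dots,x_n$ to a fuzzy Partial MaxSAT instance $\phi'=S\cup H$ whose optimum value equals that of Boolean MaxSAT on $\phi$. The only genuine difficulty is that an interpretation $I$ in \L{}ukasiewicz logic assigns each variable an arbitrary value in $[0,1]$, while Boolean MaxSAT needs values in $\{0,1\}$; so the heart of the reduction is a collection of \emph{hard} formulae that pin every variable down to a classical truth value, after which the clauses can be copied almost verbatim into the \emph{soft} part.

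First I would build the hard part $H$. For each variable $x_i$ put into $H$ the formula $B_i := x_i \rightarrow (x_i \odot x_i)$. Writing $[x_i]_I=v$, the tables of Section 1 give $[x_i\odot x_i]_I=\max\{2v-1,0\}$, which is $\le v$ for every $v\in[0,1]$; hence $[B_i]_I=\min\{1,\,1-v+\max\{2v-1,0\}\}$, which equals $1-v$ when $v\le\frac{1}{2}$ and equals $v$ when $v\ge\frac{1}{2}$, so $[B_i]_I=1$ precisely when $v\in\{0,1\}$. Consequently the models of $H$ are in one-to-one correspondence with the complete Boolean assignments of $x_1,\dots,x_n$; in particular $H$ is itself satisfiable (take every $x_i$ equal to $0$), so the instance $\phi'$ is well-posed.

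Next I would build the soft part $S$ by translating each clause: a clause $C_i=\bigvee_{j=1}^{r_i} l_{i,j}$ becomes $C_i' := l_{i,1}' \lor \cdots \lor l_{i,r_i}'$, where a positive literal is left unchanged and a negative literal $\neg x$ is rendered with \L{}ukasiewicz negation; set $S:=\{C_1',\dots,C_m'\}$. The key observation is that on the Boolean-valued interpretations singled out by $H$, \L{}ukasiewicz negation agrees with Boolean negation and weak disjunction $\lor=\max$ agrees with Boolean $\lor$ (the same holds for strong disjunction $\oplus$, since $\min\{1,x+y\}=\max\{x,y\}$ on $\{0,1\}$). Thus for any model $I$ of $H$ with associated Boolean assignment $A$ we have $[C_i']_I=1$ iff $A$ satisfies $C_i$, so the number of formulae of $S$ satisfied by $I$ equals the number of clauses of $\phi$ satisfied by $A$. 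Maximising one maximises the other, and an optimal fuzzy PMaxSAT solution of $S\cup H$ yields an optimal Boolean MaxSAT solution of $\phi$, and conversely.

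Finally I would record the size bound: $H$ has $n$ formulae of constant size and $S$ has $m$ formulae with $|C_i'|=O(|C_i|)$, so $\phi'$ is produced in time linear in $|\phi|$. The step that needs the most care is the verification of the Booleanity gadget $B_i$ — the little case analysis on $v\le\frac{1}{2}$ versus $v>\frac{1}{2}$ showing $[B_i]_I=1\iff v\in\{0,1\}$ — together with checking that $I\mapsto A$ really is a bijection between models of $H$ and Boolean assignments; everything downstream of that is routine bookkeeping.
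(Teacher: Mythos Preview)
Your proof is correct and follows the same strategy as the paper: hard formulae force each variable into $\{0,1\}$ and soft formulae transcribe the clauses, so that on the admissible interpretations the fuzzy and Boolean satisfaction counts coincide. The only cosmetic differences are your Booleanity gadget $x_i\rightarrow(x_i\odot x_i)$ in place of the paper's $\neg(x\oplus x)\oplus x$ (these compute the identical function $v\mapsto\max\{v,1-v\}$) and your use of weak $\lor$ rather than strong $\oplus$ in the soft clauses, which agree on $\{0,1\}$.
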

\begin{proof}
Let $\phi$ be a instance Boolean MaxSAT instance. We will construct a fuzzy MaxSAT instance $\phi'=S \cup H$ such that $I$ maximizes the number of satisfied clauses in $\phi$ iff $I$ maximizes the number of clauses in $\phi'$, where $I$ is an assignment.

We construct $S$ and $H$ as follows:
\begin{enumerate}
\item For every variable $x$ appearing in $H$, add the formula $\neg(x \oplus x)\oplus x$ to $\phi'$. This formula evaluates to 1 iff $x_k=0$ or $x_k=1$.
\item For every clause $(l_1 \lor \dots \lor l_i)$ in $\phi$, add the formula $(l_1 \oplus \dots \oplus l_i)$ to $S$.
\end{enumerate}

Thus $$H=\{\neg(x \oplus x)\oplus x \mid x \mbox{ appears in }\phi \}$$ and $$ S=\{(l_1 \oplus \dots \oplus l_i) \mid (l_1 \lor \dots \lor l_i) \in \phi \}$$

If the number of variables appearing in $\phi$ is $n$ and $\vert \phi \vert = m$, then $\vert \phi' \vert = n + m$ such that $\vert S \vert = m$ and $\vert H \vert = n$.

Assume that there are $k$ satisfied clauses in $\phi$. Then every variable $x$ evaluates to either 0 or 1. Thus, every $\neg (x \oplus x) \oplus x$ is satisfied and hence all $H$ is satisfied. If a clause $(l_1 \lor \dots \lor l_m)$ is satisfied, then $(l_1 \oplus \dots \oplus l_m)$ is also satisfied. Hence, there are exactly $k$ satisfied formulae in $S$.

Now assume that $I$ is a solution that satisfies $k$ clauses in $S$. Then surely every variable $x$ appearing in $\phi'$ has a value either 0 or 1. This is because $I$ certainly satisfies all formulae in $H$, which ensure just that. Since the semantics of the strong disjunction when restricted to 0 and 1 is identical to the semantics of Boolean disjunction, then if $(l_1 \oplus \dots \oplus l_i) \in S$ is satisfied then so is $(l_1 \lor \dots \lor l_i) \in \phi$. Therefore, $I$ satisfies exactly $k$ clauses in $\phi$.
\end{proof}

\section{Encodings}
Before presenting the encodings, it is important to note that one can generalize Boolean CNF by replacing the Boolean negation with the \L{}ukasiewicz negation and the Boolean disjunction with the strong disjunction. The resulting form is $$\bigwedge_{i=1}^m \left( \bigoplus_{j=1}^{r_i} l_{ij} \right)$$ and is referred to as \textit{simple \L{}-clausal form} in\cite{bofill2015finding}.

It has been shown\cite{bofill2015finding} that the satisfiability problem for any simple \L{}-clausal form is solvable in linear time, contrary to the SAT problem in Boolean logic which is NP-complete in the general case. In addition, the expressiveness of simple \L{}-clausal forms is limited. That is, not every \L{}ukasiewicz formula has an equivalent simple \L{}-clausal form. To remedy this matter, another form has been proposed called \textit{\L{}-clausal form}, for which the SAT problem is NP-complete\footnote{The proof involves reducing Boolean 3SAT to the SAT problem for \L{}-clausal forms.}.
\begin{definition}
Let $X=\{x_1,\dots,x_n \}$ be  a set of variables. A \textit{literal} is either a variable $x_i \in X$ or $\neg x_i$. A \textit{term} is a literal or an expression of the form $\neg(l_1 \oplus \dots \oplus l_k)$, where $l_1, \dots, l_k$ are literals. A \textit{\L{}-clause} is disjunction of terms. A \textit{\L{}-clausal form} is a conjunction of \L{}-clauses.
\end{definition}

\subsection{A Proposed Fuzzy MaxSAT Algorithm for Simple \L{}-clausal forms}
The proposed algorithm takes advantage of the fact that the SAT problem for simple \L{}-clausal forms is solvable in linear time. Moreover, it is based on the basic SAT-based technique of Boolean MaxSAT solving.

Let $\phi=\{C_1,\dots,C_m\}$ be a MaxSAT instance. The following formula is satisfiable iff there are $$\phi_k = \{C_1 \lor b_1,\dots,C_m \lor b_m\} \cup CNF(\sum_{i=1}^{m} b_i \leq k)$$   
where each $b_i,(1 \leq i \leq m)$ is a new variable and $CNF(\sum_{i=1}^{m} b_i \leq k)$ is the encoding of the constraint $\sum_{i=1}^{m} b_i \leq k$ to CNF. This constraint is satisfied if $\phi_k$ has at most $k$ falsified clauses. There are three ways to start searching for the value of $k$ which corresponds to the optimal solution, denoted $k_{opt}$:
\begin{enumerate}
\item Start at $k = 0$ and increase $k$ while $\phi_k$ is unsatisfiable.
\item Start at $k=m$ and decrease $k$ while $\phi_k$ is satisfiable.
\item Do binary search for $k_{opt}$: alternate between satisfiable $\phi_k$ and unsatisfiable $\phi_k$ until the algorithm converges to $k_{opt}$.

An interesting question is, can we use the same technique for simple \L{}-clausal forms? Remember that for such forms, the satisfiability problem is solvable in linear time, and thus the time complexity of the resulting algorithm will be a huge improvement over that of Boolean MaxSAT.
\end{enumerate}

\subsection{Into DLRs}
A \textit{disjunctive linear relation} (DLR) is an expression $$P_1 \circ_1 Q_1 \lor \dots \lor P_m \circ_m Q_m$$ where each $P_i$ and $Q_i$ is a polynomial of degree one with rational coefficients over the real-valued variables $X=\{x_1,\dots,x_n\}$ and each $\circ_i \in \{<,\leq,>,\geq,=,\neq \}$, ($1 \leq i \leq m$)\cite{fisher2005handbook}.

The satisfiability problem for a set $D$ of DLRs (denoted SAT$_{DLR}$) is determining whether there exists an assignment $I:X \rightarrow \mathbb{R}$ such that every DLR in $D$ is satisfied. In 1998, Jonsson and B{\"a}ckstr{\"o}m\cite{jonsson1998unifying} showed that SAT$_{DLR}$ is NP-complete.

Let $\Phi=\{\phi_1,\dots,\phi_m\}$ be a set of formulae. We replace each $\phi_i$ by $(\phi_i \rightarrow \neg y_i)$ where $y_i$ is a new variable, for $i \in \{1,\dots,m\}$. Each of these formulae ensure that if $\phi_i$ is satisfied then $y_i$ is falsified.

Each occurrence of $max\{a_1,\dots,a_k\}$ can be replaced by $(s)$ with the following inequalities: $(s \geq a_1),\dots,(s \geq a_k),(s=a_1 \lor \dots \lor s=a_k)$. The purpose of rewriting $min$ and $max$ is that they are nonlinear functions and they do not fit the formulation of DLRs. Each occurrence of $min\{a_1,\dots,a_n\}$ can be replaced and rewritten similarly. 

The final step is to add a bound on the $y_i,(1 \leq i \leq m)$ variables to capture the semantics of maximizing the number of satisfied formulae. Thus, we first add the bound $\sum_{i=1}^{m} y_i \leq 0$ and check the satisfiability of the DLR instance. If it is not satisfiable, then we keep increasing the bound until the instance is satisfiable.

\subsection{Into MILP}
A \textit{Mixed Integer Linear Program} (MILP) involves minimizing
	 $\sum_{j=1}^{n}c_jx_j$ subject to $$\sum_{j=1}^{n}a_{ij}x_j \leq b_i, (i=1,2,\dots,m)$$ $$x_j \geq 0, (j=1,2,\dots,n)$$ where some of $x_j,(1 \leq j \leq n)$ are integers and some are real and $c_j, a_{ij},b_i \in \mathbb{R}$
	 
Let $\Phi$ be a set of $r$ formulas $\{\phi_1,\dots,\phi_m\}$.
\begin{itemize}
\item Introduce $m$ new binary variables $y_1,\dots,y_m$ (one per formula).
\item Replace each formula $\phi_i$ by the relaxed formula $\neg \phi_i \rightarrow y_i$.
\item Take the resulting formula of the previous step and replace every occurrence of $min$ and $max$ due to $\neg, \oplus, \odot,\lor$ and $\wedge$ just as shown in the DLR encoding.
\end{itemize}
%$(x \oplus \neg y),(\neg x \oplus \neg(x \odot y))$

Let $f(\phi_i)$ be the result of performing the previous three steps on $\phi_i$. The MILP instance would be:
$$\mbox{Minimize }\sum_{i=1}^{m}y_i \mbox{ subject to } f(\phi_1) \geq 1, \dots, f(\phi_m) \geq 1$$

\subsection{Into WCSP}
A weighted CSP (WCSP) instance is a triple $(X,D,C)$, where $X=\{x_1,\dots,x_n\}$, $D=\{d(x_1).\dots,d(x_n) \}$ and $C=\{C_1,\dots,C_r \}$ are variables, domains and constraints respectively. Each $C_i \in C$ is a pair $(S_i,f_i)$, where $S_i =\{x_{i_1},\dots,x_{i_k} \}$ is the constraint scope and $f_i : d(x_{i_1}) \times \dots \times d(x_{i_k}) \rightarrow \mathbb{N}$ is a cost (weight) function that maps each tuple to its associated weight. 
 
An optimal solution to a WCSP instance is a complete assignment to the variables in $X$ in which the sum of the costs of the constraints is minimal. The WCSP Problem for a WCSP instance consists in finding an optimal solution for that instance.

Given a set of formulae $\Phi=\{\phi_1,\dots,\phi_m\}$, we encode the problem as follows:
\begin{enumerate}
\item Create a variable $x_i$ for each formula $\phi_i$ with domain $d(x_i)$ the set of possible assignments to the variables appearing in $\phi_i$. When $x_i$ takes a value in $d(x_i)$ this represents the fact that the variables of $\phi_i$ have been assigned accordingly.

\item For each variable $x_i$ add a constraint that assigns cost 0 to each domain value satisfying $\phi_i$ and assigns cost $\infty$  to the values falsifying $\phi_i$.

\item For each two formulae $\phi_i$ and $\phi_j$ sharing variables, we add a constraint that assigns cost $\infty$ to assignments that assign different values to the shared variables, and cost 0 otherwise.
\end{enumerate}

\section{Future Work}
Recently, it has been found that the satisfiability problem for simple \L{}-clausal forms can be solved in linear time. We will investigate whether or not an algorithm for the new MaxSAT problem for simple \L{}-clausal forms can take advantage of this fact in order for its time complexity to be polynomial. 

An interesting alternative definition to fuzzy MaxSAT is: Given a formula $\phi$ in \L{}ukasiewicz logic, find an assignment $I$ such that $[\phi]_I$ is maximum. In other words, the new definition asks for an assignment that maximizes $\phi$'s truth degree.

\bibliographystyle{plain}
\bibliography{references}
\end{document}